\def\B{\mathcal{B}}
\def\C{\mathcal{C}}
\def\F{\mathbb{F}}
\def\Z{\mathbb{Z}}
\renewcommand{\th}{$^{\rm th}$}
\newcommand{\balpha}{\boldsymbol{\alpha}}
\renewcommand\epsilon{\varepsilon}
\renewcommand\phi{\varphi}
\let\Tr\relax
\DeclareMathOperator{\Tr}{Tr}
\let\tr\relax
\DeclareMathOperator{\tr}{tr}
\DeclareMathOperator{\RS}{RS}
\DeclareMathOperator{\GRS}{GRS}
\DeclareMathOperator{\rank}{rank}
\DeclareMathOperator{\galring}{GR}
\newcommand{\gr}[3]{\galring({#1}^{#2},{#3})}
\newtheorem{thm}{Theorem}[section]
\newtheorem{prop}[thm]{Proposition}
\newtheorem{cor}[thm]{Corollary}
\newtheorem{lemma}[thm]{Lemma}
\theoremstyle{definition}
\newtheorem{ex}[thm]{Example}
\newtheorem{defn}[thm]{Definition}
\newtheorem{rmk}[thm]{Remark}
\theoremstyle{remark}
\begin{document}

\title{Linear exact repair schemes for free MDS and Reed-Solomon codes over Galois rings}


\author[Bossaller]{Daniel P. Bossaller}
\email[UAH]{daniel.bossaller@uah.edu}
\address[UAH]{Department of Mathematical Sciences\\
University of Alabama in Huntsville\\
Huntsville, AL, USA}

\author[Lopez]{Hiram H. L\'opez}
\email[VT]{hhlopez@vt.edu}
\address[VT]{Department of Mathematics\\
Virginia Tech\\
Blacksburg, VA, USA
\vskip 1 cm
Dedicated to Professor Sudhir Ghorpade on the occasion of his sixtieth birthday.
\vskip 1 cm
}

\thanks{Hiram H. L\'opez was partially supported by the NSF grants DMS-2201094 and DMS-2401558.}

\begin{abstract}
Codes over rings, especially over Galois rings, have been extensively studied for nearly three decades due to their similarity to linear codes over finite fields. A distributed storage system uses a linear code to encode a large file across several nodes. If one of the nodes fails, a linear exact repair scheme efficiently recovers the failed node by accessing and downloading data from the rest of the servers of the storage system. In this article, we develop a linear repair scheme for free maximum distance separable codes, which coincide with free maximum distance with respect to the rank codes over Galois rings. In particular, we give a linear repair scheme for full-length Reed-Solomon codes over a Galois ring.
\end{abstract}

\maketitle

\section{Introduction}
A distributed storage system encodes a large file across several nodes or servers. If one of the nodes fails, the storage system should recover the information from the rest of the servers. The problem of recovering the failed node is known as the exact repair problem.

%

A Reed-Solomon code of length $n$ and dimension $k$ over the finite field $\F_{q^m}$ is the $k$-dimensional subspace of $\F_{q^m}^n$ formed by evaluating all polynomials of degree less than $k$ on a subset of $\F_{q^m}$ with $n$ elements. Since Reed-Solomon codes are MDS, which means maximum distance separable (i.e., their minimum Hamming distance is $d = n - k + 1$), they have the property that any $k$ entries of a codeword can be used to determine the other entries. Thus, when a Reed-Solomon code is used in a distributed storage system, a naive algorithm will download elements of $\F_{q^m}$ from $k$ different nodes. Since $\F_{q^m}$ is an $m$-dimensional vector space over $\F_q$, every element of $\F_{q^m}$ is uniquely represented by a vector in $\F_q^m$ once a basis is fixed. Therefore, we say that a naive algorithm has a repair bandwidth of $km$ over $\F_q$ because the algorithm requires $k$ times $m$ elements of $\F_q$ to solve the exact repair problem. If the elements of $\F_{q^m}$ are called symbols, and the elements of $\F_q$ are called subsymbols, the naive algorithm has a repair bandwidth of $k$ symbols or $km$ subsymbols.

The fundamental work by Guruswami and Wootters~\cite{gw17} initiated the development of linear repair schemes. These schemes efficiently solve the exact repair problem when a distributed storage system encodes the information with a Reed-Solomon
code. Guruswami and Wootters based their linear exact repair scheme on the trace function from $\F_{q^m}$ to $\F_q$, and the dual of the Reed-Solomon code, which is a (generalized) Reed-Solomon code. In contrast to the naive algorithm of a distributed storage system with a Reed-Solomon code that has to access and download $\F_{q^m}$ symbols from $k$ different nodes with a repair bandwidth of $km$ subsymbols, the linear exact repair scheme from Guruswami and Wootters needs to access and download $\F_q$ subsymbols from the remaining $n-1$ nodes. Thus, the linear exact repair scheme has a repair bandwidth of $n-1$ subsymbols.

Linear exact repair schemes have been extended to other families of linear codes. Some examples are the following. In~\cite{repairRM}, the authors developed a linear exact repair scheme for Reed-Muller codes, which are defined by the evaluation of multivariate polynomials up to a certain degree over the set of all points in $\F_{q^m}^s$. In~\cite{repairAG}, the authors developed a linear exact repair scheme for algebraic geometry (AG) codes, which are defined, in a few words, by the evaluation of rational functions over the set of rational points of a curve. In~\cite{repairaugmented}, the authors developed a linear exact repair scheme for augmented Reed-Muller codes, which are defined by strategically adding the evaluations of extra monomials to a Reed-Muller code.

Linear codes over finite rings, i.e., submodules of $R^n$, where $R$ is a ring other than a finite field, have been extensively studied over the last thirty years. 
In~\cite{HKCS1994}, the authors showed that the well-known nonlinear binary Kerdock and Preparata codes are linear codes over the ring $\Z_4$ under an appropriate map. In~\cite{HLCodesChain2000}, the authors studied codes over chain rings, whose ideal lattice is linearly ordered. Linear codes over principal ideal rings and Galois rings were studied in~\cite{PIRcodes} and~\cite{galoiscodes}, respectively. Chain rings, principal ideal rings, and Galois rings are all examples of (finite) Frobenius rings. In two groundbreaking results, Wood showed in~\cite{wood99} and \cite{wood08} that finite Frobenius rings are particularly appropriate for linear codes over rings because the classical MacWilliams Identities and the MacWilliams Extension Theorem hold for linear codes over a finite ring if and only if the finite ring is a finite Frobenius ring; these two crucial results in coding theory were established initially in Jesse MacWilliams's Dissertation~\cite{FJMacWilliamsThesis}. Families of codes over finite rings, such as cyclic, quasi-cyclic, and convolutional codes, have been studied in~\cite{codesandrings} and~\cite{doughertybook}. Various classes of linear codes over finite fields, like Reed-Muller codes and generalized Reed-Solomon codes, have their counterpart in the finite ring setting. In~\cite{QBC13}, the authors introduce generalized Reed-Solomon codes over finite commutative rings and generalize their classical Welch-Berlekamp and Guruswami-Sudan algorithms for maximum likelihood decoding and list decoding. In this paper, we develop a linear exact repair scheme when a distributed storage system encodes the information using a free code over a Galois ring. Galois rings are widely studied due to their structural similarities to finite fields. While it is common for codes over finite rings to consider different metrics, we focus in this article on the Hamming metric because of the present article's focus on distributed storage systems.

This article is organized as follows. In Section~\ref{RSoverRings}, we review the theory of Galois rings. We define free MDS codes over Galois rings and then the Teichm\"uller set of a Galois ring to introduce generalized Reed-Solomon codes over a Galois ring. In Section~\ref{MDScodes}, we review the theory of regenerating codes and linear exact repair schemes for Reed-Solomon codes over finite fields, focusing mainly on the Guruswami-Wootters repair scheme. In Section~\ref{RepairGalois}, we extend the theory of regenerating codes and linear exact repair schemes to free MDS codes over Galois rings. In Section~\ref{RSGalois}, we define a linear exact repair scheme for full-length Reed-Solomon codes over a Galois ring. In Section~\ref{conclusion}, we comment on future directions for codes over more general Frobenius rings.

\section{Preliminaries}\label{RSoverRings}
In this section, we introduce Galois rings and list many of their properties that will be useful in the sequel. We then describe maximum distance separable (MDS) and generalized Reed-Solomon codes over rings and their main properties.

\subsection{Galois rings}\label{24.05.25}


Let $\Z_{p^n}$ be the ring of integers modulo a prime power $p^n$. As this ring $\Z_{p^n}$ is local with maximal ideal generated by $\langle p \rangle = p \Z_{p^n}$ and residue field $\Z_{p^n} / \langle p \rangle \simeq \F_p$, the canonical surjective map $\bar \cdot: \Z_{p^n} \to \F_p$ extends naturally to a surjective map on the polynomial rings
\[\bar \cdot: \Z_{p^n}[x] \to \F_{p}[x].\]
An element $f(x) \in \Z_{p^n}[x]$ is called a {\bf monic basic irreducible} polynomial if $\overline{f(x)} \in \F_p[x]$ is monic and irreducible over the prime field $\F_p$.

\begin{defn}
Let $f(x) \in \Z_{p^n}[x]$ be a monic basic irreducible polynomial of degree $m$ with coefficients from $\Z_{p^n}$. The ring
\[\gr pnm := \frac{\Z_{p^n}[x]}{\langle f(x)\rangle}\]
is called the {\bf Galois ring} of degree $m$ over $\Z_{p^n}$. The Galois ring $\gr pnm$ has characteristic $p^n$; thus, we will call $\Z_{p^n}$ the {\bf characteristic ring} of $\gr{p}{n}{m}$.
\end{defn}
Observe that the Galois ring of degree $m$ over the finite field $\Z_p \simeq \F_p$ is the finite field $\F_{p^m}$. On the other hand, the Galois ring of degree $1$ over $\Z_{p^n}$ is precisely $\Z_{p^n}$. In the following we summarize a series of properties about Galois rings.

\begin{prop} \label{GaloisRingproperties}(\cite{mcdonald}, Ch. XVI and \cite{lecffgr}, Ch. 14)
For a Galois ring $\gr pnm$, the following properties hold.
\begin{enumerate}[label={\rm (\alph*)}]
    \item $\gr pnm$ has a unique maximal ideal $\langle p \rangle = p \gr pnm$, and its residue field is $\gr pnm / \langle p \rangle \simeq \F_{p^m}$.
    
    \item The ideals of $\gr pnm$ form a chain
    \[\gr pnm \supseteq \langle p \rangle \supseteq \langle p^2 \rangle \supseteq \cdots \supseteq \langle p^{n-1}\rangle \supseteq \langle p^n \rangle= 0.\]
    
    \item For each $i$, $\langle p^i \rangle / \langle p^{i+1} \rangle$ is a one dimensional vector space over $\F_{p^m}$.

    \item Every Galois ring $\gr{p}{n}{ml}$, where $l$ is an integer $\geq 1$, is a Galois extension of $\gr pnm$. That is, there is a monic basic irreducible polynomial $\rho(x) \in \gr pnm[x]$ so that $\deg(\rho(x)) = l$ and 
    \[\gr{p}{n}{ml} \simeq \frac{\gr{p}{n}{m}[x]}{\langle \rho(x) \rangle}.\]
    \item Given a Galois ring $\gr{p}{n}{ml}$, where $l$ is an integeger $\geq 1$, the group of automorphisms of $\gr{p}{n}{ml}$ which fix $\gr pnm$ is a cyclic group of order $l$ generated by the map $\sigma_m: \gr{p}{n}{ml} \to \gr{p}{n}{ml}$ given by $\sigma_m(x) = x^{p^m}$.
\end{enumerate}
\end{prop}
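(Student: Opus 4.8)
The plan is to derive everything from one structural fact: $\gr pnm$ is a finite local ring whose maximal ideal is generated by the nilpotent element $p$. Parts (a)--(c) are then internal to this ring, while (d) and (e) follow by reducing modulo $p$ to the residue field $\F_{p^m}$ and lifting back.

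For (a), I would reduce the presentation $\gr pnm = \Z_{p^n}[x]/\langle f(x)\rangle$ modulo $p$. Since $\Z_{p^n}/\langle p\rangle \simeq \F_p$, this gives
\[\gr pnm/\langle p\rangle \simeq \F_p[x]/\langle \overline{f(x)}\rangle \simeq \F_{p^m},\]
the last step because $\overline{f(x)}$ is irreducible of degree $m$ by hypothesis; hence $\langle p\rangle$ is maximal with the claimed residue field. To see it is the \emph{unique} maximal ideal, I would show every $a\notin\langle p\rangle$ is a unit: then $\bar a\neq 0$ in the field $\F_{p^m}$, so $ab = 1 - pc$ for some $b$, and since $p^n=0$ the element $1-pc$ is invertible with inverse $\sum_{j=0}^{n-1}(pc)^j$, forcing $a$ to be a unit. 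Thus $\gr pnm$ is local. For (b), locality together with a principal maximal ideal makes $\gr pnm$ a chain ring: given a nonzero ideal $I$, choose $a\in I$ lying in $\langle p^i\rangle\setminus\langle p^{i+1}\rangle$ with the valuation $i$ minimal over $I$, write $a = p^i u$ with $u$ a unit, conclude $p^i\in I$, and hence $I=\langle p^i\rangle$. The chain descends strictly to $\langle p^n\rangle = 0$ because $p^n=0$.

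Part (c) is a direct computation: multiplication by $p^i$ gives a surjection $\gr pnm \to \langle p^i\rangle$ that descends to an $\F_{p^m}$-linear map $\gr pnm/\langle p\rangle \to \langle p^i\rangle/\langle p^{i+1}\rangle$. Computing that the annihilator of $p^i$ is $\langle p^{n-i}\rangle\subseteq\langle p\rangle$ for $i<n$, I would check that the kernel of $a\mapsto p^i a \bmod \langle p^{i+1}\rangle$ is exactly $\langle p\rangle$, so the descended map is an isomorphism and each quotient is one-dimensional over $\F_{p^m}$.

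Parts (d) and (e) are where I expect the real work. For (d) I would start from the residue field: $\F_{p^{ml}}/\F_{p^m}$ is generated by a root of some monic irreducible $\bar\rho(x)\in\F_{p^m}[x]$ of degree $l$; lifting $\bar\rho$ to a monic $\rho(x)\in\gr pnm[x]$ produces a basic irreducible polynomial, and I would verify that $\gr pnm[x]/\langle\rho(x)\rangle$ is a finite local ring of characteristic $p^n$, cardinality $p^{nml}$, and residue field $\F_{p^{ml}}$, so that uniqueness of Galois rings identifies it with $\gr{p}{n}{ml}$. The main obstacle is (e): showing $\aut(\gr{p}{n}{ml}/\gr pnm)$ is \emph{exactly} cyclic of order $l$, rather than merely that it contains such a subgroup. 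My plan is to use the Teichm\"uller structure---every element writes uniquely as $\sum_{j=0}^{n-1}\tau_j p^j$ with $\tau_j$ Teichm\"uller representatives---to define $\sigma_m$ by raising Teichm\"uller coordinates to the $p^m$ power and fixing $p$, verify it restricts to the identity on $\gr pnm$, and then show the reduction map $\aut(\gr{p}{n}{ml}/\gr pnm)\to\mathrm{Gal}(\F_{p^{ml}}/\F_{p^m})$ is an isomorphism onto the cyclic group of order $l$ (injective because an automorphism fixing both the residue field and the Teichm\"uller set must be the identity, surjective because $\sigma_m$ lifts the residue Frobenius). The delicate points---well-definedness of $\sigma_m$ and uniqueness of the Teichm\"uller coordinates---are exactly where I would lean on the structure theory of the cited references.
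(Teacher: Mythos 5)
The paper never proves this proposition at all: it is stated as background, with the proof delegated to the cited references (McDonald, Ch.~XVI and Wan, Ch.~14). So the comparison here is against the standard textbook treatment in those references, and your outline is essentially that treatment. Parts (a)--(c) of your sketch are complete and correct as written: the nilpotency-of-$p$ argument for locality (inverting $1-pc$ by the finite geometric series), the valuation argument $a = p^i u$ with $u$ a unit giving $I = \langle p^i \rangle$, and the kernel computation showing $\langle p^i\rangle/\langle p^{i+1}\rangle \simeq \gr{p}{n}{m}/\langle p\rangle$ are exactly the standard proofs. Part (d) is a correct reduction to the uniqueness theorem for Galois rings (your cardinality, characteristic, and residue-field constraints do force the maximal ideal of $\gr{p}{n}{m}[x]/\langle \rho(x)\rangle$ to be $\langle p \rangle$, and a finite local ring with maximal ideal generated by $p$, characteristic $p^n$, and residue field $\F_{p^{ml}}$ is $\gr{p}{n}{ml}$); invoking that theorem is consistent with the level at which the paper itself treats the whole proposition.

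Two remarks on (e), where you rightly locate the real work. First, a point in your favor: you define $\sigma_m$ on Teichm\"uller coordinates, $\sigma_m\bigl(\sum_j \tau_j p^j\bigr) = \sum_j \tau_j^{p^m} p^j$, rather than as the literal power map. This is necessary, not optional: for $n \geq 2$ the map $x \mapsto x^{p^m}$ is not additive (in $\Z_4$ one has $(1+1)^2 = 0 \neq 2 = 1^2 + 1^2$), so the paper's own phrasing ``$\sigma_m(x) = x^{p^m}$'' is an abuse of notation valid only on Teichm\"uller elements, and your reading is the correct one. Second, the one substantive gap you leave open---and explicitly defer to the references---is precisely the hard step: that this coordinate-wise Frobenius is a \emph{ring} homomorphism (additivity is the issue, since Teichm\"uller coordinates do not add coordinate-wise). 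A clean way to close it without Witt-vector manipulations: write $S = R[\xi] \simeq R[x]/\langle \rho(x)\rangle$ with $\xi$ a Teichm\"uller root of the basic irreducible $\rho$, show by Hensel lifting from the residue field that $\xi^{p^m}$ is again a root of $\rho$, and define $\sigma_m$ as the substitution map $h(\xi) \mapsto h(\xi^{p^m})$ for $h(x) \in R[x]$; this is automatically a ring endomorphism fixing $R$ and agrees with your coordinate-wise description. With that replacement, your injectivity argument (an automorphism inducing the identity on the residue field fixes each Teichm\"uller element, hence every $p$-adic expansion) and your surjectivity argument (the reduction of $\sigma_m$ is the relative Frobenius generating $\mathrm{Gal}(\F_{p^{ml}}/\F_{p^m})$) go through verbatim, and the proof of (e) is complete.
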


Let $\bar \cdot : \gr pnm \to \F_{q^m}$ be the canonical projection map and choose some $\gamma \in \gr pnm$ so that $\bar \gamma$ is a primitive element of the residue field $\F_{p^m}$. Then $\gamma$ is a unit of $\gr pnm$ of order $p^m - 1$. Furthermore, it follows from Proposition \ref{GaloisRingproperties}, parts (b) and (c) that every $a \in \gr pnm$ has a unique $p$-adic representation
\[a = a_0 + a_1 p + \cdots + a_{n-1}p^{n-1},\]
where $a_i \in \mathcal T := \{0\}\cup\{\gamma^i : 0 \leq i \leq p^m-1\}$.
The set $\mathcal T$, which consists of coset representatives of the distinct elements of the field $\F_{p^m}$, is called the {\bf Teichm\"uller set} of $\gr pnm$.

\begin{rmk}
    Note that the set $\mathcal T$ is not unique; given another $\beta \in \gr pnm$ such that $\bar \beta$ is primitive, we might construct a different Teichm\"uller set by including powers of $\beta$ rather than powers of $\gamma$. For this reason, when working with a Galois ring, we will implicitly fix some Teichm\"uller set $\mathcal T$.
\end{rmk}
\subsection{Codes over rings}
In this subsection, we introduce some of the main objects in this manuscript: free maximum distance separable (MDS) and generalized Reed-Solomon codes over commutative rings. Throughout the article, unless otherwise stated, $R$ will be assumed to be a commutative ring with identity. In particular, $R$ can be assumed to be a Galois ring.

We start with the general definition of codes over rings.
Let $R$ be a commutative ring with identity.
\begin{defn}
A (linear) {\bf code} of length $n$ over the ring $R$ is an $R$-module $\C \subseteq R^n$. 
The rank of $\C$, denoted by $\rank(\C)$, is defined as the minimum number of generators of $\C$ as an $R$-module. We denote by $d(\C)$ the minimum Hamming weight of the code $\C$, i.e.
\[d(\C) := \min\{\operatorname{wt}(c) : c \in \C, c\neq 0 \},\]
where $\operatorname{wt}(c)$ represents the number of non-zero entries of $c$. The code $\C$ is a {\bf free code} if $\C$ is a free module over $R$.
\end{defn}
\begin{rmk}
Note that codes over rings do not need to as clear-cut a definition of `dimension' as is the case for codes over fields. Indeed, for an $R$-module $\C$, the size of the smallest generating set for $\C$ need not be the same as the rank of the largest free submodule of $\C$.
\end{rmk}

Let $\C$ be a linear code over $R$. The well-known Singleton bound (see ~\cite{bok:MW}) states
\[d(\C) \leq n - \log_{|R|}|\C| +1.\]
A code meeting this bound is called a {\bf maximum distance separable (MDS)} code. There is also a bound in terms of the rank. By~\cite{Dougherty}, we have
\[d(\C) \leq n - \rank(\C) +1.\]
A code meeting the previous bound is called a {\bf maximum distance respect to Rank (MDR)} code.

Assume that $\C \subseteq R^n$ is free of rank $k$. As $\C$ is isomorphic to $R^k$ as an $R$-module, then
\[\log_{|R|}(|\C|) = k = \rank(\C).\]
\begin{rmk}\label{24.05.26}
The previous equation implies that for a free code $\C \subseteq R^n$, the property of being MDS is equivalent to being MDR. From now on, we will mainly use the term MDS for a free code that meets the Singleton bound.
\end{rmk}

Following the lead of Quinton, Barbier, and Chabot in~\cite{QBC13}, we define and give basic properties of generalized Reed-Solomon codes over a ring. Reed-Solomon codes are defined by evaluating polynomials on certain points called evaluation points. Reed-Solomon codes over finite fields benefit from the fact that for any two evaluation points $a_i \neq a_j$ in the finite field $\F_q$, the difference $a_i - a_j$ is nonzero and invertible. This property certainly is not generally true for rings, which motivates the following definition.

\begin{defn}\label{25.05.18}
Let $R$ be any ring and $U(R)$ the group of units of $R$. A subset $L \subseteq R$ is called {\bf subtractive} if for every pair of distinct elements $a$ and $b$ in $L$, the difference $a-b$ is an element of $U(R)$.
\end{defn}
Given two elements $a,b$ in a subtractive set $L$, Definition~\ref{25.05.18} requires the difference $a-b$ to be invertible, but $a-b$ does not need to be in $L$. For a finite field $\mathbb{F}_q$, any subset $L \subseteq \mathbb{F}_q$ is subtractive,
and the largest subtractive set is $L = \mathbb{F}_q$. In a ring $R$, the size of the largest $L$ depends on the structure of the ring. For example, in $\mathbb{Z}_4$, $L= \{0, 1\}$, $L= \{0, 3\}$, $L= \{1, 2\}$, and
$L= \{2, 3\}$ are subtractive sets, and no subtractive set can be constructed using three elements from $\mathbb{Z}_4$. In $\mathbb{Z}_{15}$, $L=\{0,2,4\}$ is a subtractive set with three elements. In general, we can ensure that every ring with identity has a subtractive subset of size at least two, namely $L = \{0, 1\}$.

\begin{defn}
Let $R$ be a ring and $U(R)$ the group of units of $R$.
For a subtractive subset $\balpha = \{\alpha_1, \alpha_2, \ldots, \alpha_n\}$, and a vector of units $\boldsymbol{v} = (v_1, v_2, \ldots, v_n) \in U(R)^n$, the left $R$-module \[\RS_{\boldsymbol{v}}(n,k) = \{(v_1 f(\alpha_1), v_2 f(\alpha_2), \ldots, v_n f(\alpha_n)): f(x) \in R[x] \text{ and } \deg(f) < k\}\]
is called the {\bf generalized Reed-Solomon code} of length $n$ and pseudo-dimension $k$ over $R$. The vector $\boldsymbol{v}$ is called the vector of multipliers. When $\boldsymbol{v} = (1, 1, \ldots, 1)$, $\RS_{\boldsymbol{v}}(n,k)$ is denoted by $\RS(n,k)$.
\end{defn}

We have the following properties for generalized Reed-Solomon codes over a commutative ring with identity.
\begin{prop}(\cite{QBC13})\label{RSMDS}
Let $\mathcal C = \RS_{\boldsymbol{v}}(n, k)$ be a generalized Reed-Solomon code over a commutative ring with identity. Then, the following holds.
\begin{enumerate}[label={\rm (\alph*)}]
\item $\mathcal C$ is a free code over $R$.
\item As a left module, $\rank(\C) = k$.
\item Under the standard Euclidean inner product, the dual of a generalized Reed-Solomon code is again a generalized Reed-Solomon code, with the same evaluation set $\balpha$, but a different vector of multipliers. In other words,
\[\C^\perp = \RS_{\boldsymbol{v}'}(n,n-k),\]
for some ${\boldsymbol{v}'} \in U(R)^n$.
\item $\mathcal C$ is MDS, that is, $d(\mathcal C) = n-k+1$.
\end{enumerate}
\end{prop}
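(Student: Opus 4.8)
The plan is to reduce all four statements to a single observation: every maximal ($k\times k$) minor of the $k\times n$ generator matrix
\[
G=\bigl(v_i\,\alpha_i^{\,t}\bigr)_{0\le t\le k-1,\ 1\le i\le n}
\]
of $\RS_{\boldsymbol v}(n,k)$ is a unit of $R$. Indeed, fixing $k$ columns indexed by $i_1<\cdots<i_k$, the corresponding minor factors as $\bigl(\prod_{j}v_{i_j}\bigr)\cdot\prod_{a<b}(\alpha_{i_b}-\alpha_{i_a})$, a generalized Vandermonde determinant. Each $v_{i_j}$ is a unit by hypothesis, and each difference $\alpha_{i_b}-\alpha_{i_a}$ is a unit because $\balpha$ is subtractive (Definition~\ref{25.05.18}); hence the minor is a product of units, so a unit. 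I expect this lemma to carry essentially the whole proposition, the only genuinely delicate point being the equality (as opposed to containment) in part (c).

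For (a) and (b), I would use the unit minor on the first $k$ columns to show the evaluation map $R[x]_{<k}\to R^n$, $f\mapsto(v_1f(\alpha_1),\dots,v_nf(\alpha_n))$, is injective: if the image of $f=\sum_{t<k}c_tx^t$ vanishes, then $(c_0,\dots,c_{k-1})G_S=0$ for the first-$k$-columns submatrix $G_S$, and invertibility of $G_S$ forces all $c_t=0$. Since $R[x]_{<k}\cong R^k$, this exhibits $\C$ as free of rank $k$; because free modules over a commutative ring have well-defined rank, $\rank(\C)=k$. For (d), the Singleton bound already gives $d(\C)\le n-k+1$; for the reverse inequality I would argue that a nonzero codeword $c=mG$ cannot vanish on $k$ coordinates, since those $k$ positions would give a unit submatrix $G_S$ with $mG_S=0$, forcing $m=0$. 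Hence a nonzero codeword has at most $k-1$ zero entries and weight at least $n-k+1$, so $d(\C)=n-k+1$ and $\C$ is MDS (equivalently MDR, by Remark~\ref{24.05.26}).

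The heart of the argument is part (c). I would first produce the candidate dual multipliers explicitly by setting $u_i=\prod_{j\ne i}(\alpha_i-\alpha_j)^{-1}$, a unit again by subtractivity, and $v_i'=u_iv_i^{-1}$, so that $\mathcal D:=\RS_{\boldsymbol v'}(n,n-k)$ is defined. The containment $\mathcal D\subseteq\C^\perp$ reduces to the identity $\sum_i u_i\alpha_i^{\,r}=0$ for $0\le r\le n-2$, which is the Lagrange interpolation relation obtained by reading off the coefficient of $x^{n-1}$ in the interpolation formula; applied to the product $fg$ of a degree-$<k$ and a degree-$<(n-k)$ polynomial (whose degree is at most $n-2$), it yields $\langle(v_if(\alpha_i)),(v_i'g(\alpha_i))\rangle=\sum_i u_i(fg)(\alpha_i)=0$.

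The obstacle is upgrading this containment to equality, which over a ring does not follow from a dimension count as it would over a field; my plan is module-theoretic. The unit-minor lemma shows the pairing map $R^n\to R^k$, $y\mapsto Gy^{\mathsf T}$ (pairing $y$ against a basis of $\C$), is surjective and split, so the sequence $0\to\C^\perp\to R^n\to R^k\to0$ splits and $\C^\perp$ is a projective direct summand of $R^n$ of rank $n-k$. By part (a) applied to $\mathcal D$, the module $\mathcal D$ is free of rank $n-k$, and since its generator matrix extends to an invertible full Vandermonde-with-multipliers matrix, $\mathcal D$ is itself a direct summand of $R^n$. Now $\mathcal D\subseteq\C^\perp$ with $\mathcal D$ a summand of $R^n$, so by the modular law $\mathcal D$ is a summand of $\C^\perp$; the complementary summand is projective of rank $(n-k)-(n-k)=0$, hence zero, giving $\C^\perp=\mathcal D$. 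In the finite Galois-ring setting one may instead close the argument by cardinality, using $\abs{\C}\,\abs{\C^\perp}=\abs{R}^n$ together with $\abs{\C}=\abs{R}^k$ and $\abs{\mathcal D}=\abs{R}^{n-k}$.
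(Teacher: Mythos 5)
Your proposal is correct, but there is no line-by-line comparison to make: the paper states Proposition~\ref{RSMDS} with a citation to \cite{QBC13} and supplies no proof of its own, so what you have written is a self-contained substitute for an imported result. Your route is the natural one, and it is sound. The single lemma that every $k\times k$ minor of the generator matrix is a unit (a product of the unit multipliers with a Vandermonde determinant of a subtractive set) does carry (a), (b), and (d) exactly as in the finite-field case, and your treatment of (c) correctly isolates the genuine difficulty: over a ring one cannot pass from $\mathcal D \subseteq \C^\perp$ to equality by a dimension count. Your fix --- the split exact sequence $0 \to \C^\perp \to R^n \to R^k \to 0$ coming from surjectivity of $y \mapsto Gy^{\mathsf T}$, the fact that $\mathcal D$ is a direct summand of $R^n$ because its generator matrix extends to the full invertible Vandermonde-with-multipliers matrix, the modular law, and a rank computation killing the complementary summand --- works over an arbitrary commutative ring with identity, which is actually more general than the paper's Galois-ring setting; the cardinality shortcut $\abs{\C}\,\abs{\C^\perp}=\abs{R}^n$ is also legitimate there, since $\abs{\C^\perp}=\abs{R}^{n-k}$ follows from the same splitting rather than from any Frobenius-ring duality. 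Two small points are worth an explicit sentence if you write this up: first, in (b) the paper defines rank as the minimal number of generators, so you should note that $R^k$ cannot be generated by fewer than $k$ elements (pass to the residue field of a maximal ideal); second, in (c) the step ``projective of rank $0$, hence zero'' should be justified by localizing at each prime, where a finitely generated projective module whose localizations all vanish is zero. Neither is a gap, just standard commutative algebra that deserves to be made explicit.
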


In~\cite{QBC13}, Quintin, Barbier, and Chabot note that the assumption of the subtractivity of $\balpha$ is necessary for many of the properties of polynomials over rings, such as the correspondence between the roots $f(a) = 0$ and the factors $(x - a)$ of $f(x)$. By Remark~\ref{24.05.26}, Reed-Solomon codes are also MDR, maximum distance with respect to rank, because they are free by Proposition~\ref{RSMDS}~(a).

\section{Repair schemes for codes over finite fields}\label{MDScodes}
Before introducing repair schemes for generalized Reed-Solomon codes over Galois rings, we give in this section the theory of regenerating codes and repair schemes in the context of finite fields.

A distributed storage system uses a linear code to encode a large file across $n$ nodes or servers. One of the main goals of a distributed storage system is to reliably store files, which is paramount, so if one of the servers fails, the storage system introduces a recovery server to replace the failed server.


Maximum distance separable codes are desirable for distributed storage systems because they achieve optimal tradeoff between the dimension of the code and minimum distance; given an $[n,k]$ MDS code, any $k$ symbols uniquely determine the codeword. But this tradeoff is sharp in that no codeword is determined by fewer than $k$ symbols. Indeed, consider that a distributed storage system utilizes an $[n,k]$ MDS code to store the information in $n$ nodes $f_1,\ldots,f_n$. Then, if there is an erasure at the node $f_1$, the storage system introduces a repair node $f_r$ to replace the failed node $f_1$. Then, the storage system must consult at least $k$ distinct helper nodes to repair the single lost node. As the storage system is using a $[n,k]$ MDS code, the inefficiency is clear since for $f_r$ to replicate the information at node $f_1$, the $k$ helper nodes must communicate a total of $k$ times the amount of data stored in the lost node.

\begin{center}
\fbox{\adjustbox{padding = .25cm}{
\begin{tikzpicture}[node distance = 1cm and .5cm]
\tikzset{nd/.style={rectangle, draw = black, fill = white, very thick, minimum height = .75cm, minimum width = 1.25cm}}
\tikzset{fnd/.style={rectangle,draw = black, fill = gray, opacity = .5, very thick, minimum height = .75cm, minimum width = 1.25cm}}
\tikzset{rnd/.style={rectangle, draw = black, fill = green, very thick, minimum height = .75cm, minimum width = 1.25cm}}
\node[fnd] (f1) at (0,0) {$f_1$};

\node[nd, right = of f1] (f2){$f_2$};
\node[nd, right = of f2] (f3){$f_3$};
\node[right = of f3](cdots){$\cdots$};
\node[nd, right = of cdots] (fn){$f_n$};

\node[nd, below = of f1] (r){$f_r$};

\draw[-stealth, thick] (f2) to (r);
\draw[-stealth, thick] (f3) to (r);
\draw[-stealth, thick] (fn) to (r);

\end{tikzpicture}
}}
\end{center}

Regenerating codes were introduced by Dimakis et al. in \cite{dimakisetal} to address the inefficiency of MDS codes in distributed storage systems. In essence, regenerating codes may query more than $k$ helper nodes, but they only download part of the information contained at the node. In linear codes over $\F_{q^m}$, this is accomplished through ``subpacketization" via subsymbols, that is, noting that a single symbol of $\F_{q^m}$ may be regarded as a vector in the $\F_{q}$-vector space $\F_q^m$. The helper node may, instead, send a vector of subsymbols from $\F_q$.

Let $\F_{q} \subseteq \F_{q^m}$ be an extension of finite fields. The {\bf (field) trace} of $\F_{q^m}$ over $\F_{q}$ is a surjective $\F_q$-linear map $\tr:\F_{q^m} \to \F_{q}$ defined as \[\tr(x) = \sum_{i = 0}^{m-1} x^{q^i}.\]

\begin{rmk}\label{autofields}
Since every automorphism of $\F_{q^m}$ that fixes $\F_q$ is of the form $\sigma_i(x):= x^{q^i}$, we can equivalently define the field trace as the sum of the distinct automorphisms of $\F_{q^m}$ that fix $\F_q$.
\end{rmk}

The field trace $\tr: \F_{q^m} \to \F_q$ proves to be very effective for subpacketization since for any basis $\B = \{\beta_i: 1 \leq i \leq m\}$ for $\F_{q^m}$ over $\F_q$, there exists a trace-dual basis $\B^* = \{\beta_i^*: 1 \leq i \leq m\}$ such that for any $a \in \F_{q^m}$,
\[a = \sum_{i = 1}^m \tr(\beta_i^* a) \beta_i.\] In other words, the coefficients for the expansion of $a$ with respect to the basis $\B$ are calculated using the trace-dual basis. See \cite{LNffields} Chapter 2 for a deeper theoretical background on the trace and the trace-dual basis for finite fields.

Given a file encoded using the code $\C$ over a finite field $\F_{q^m}$, which has a dual code $\C^\perp$, Guruswami and Wootters define a linear repair scheme for the lost node $f_i$ \cite{gw17} in the following way. Given the codeword $f = (f_1, f_2, \ldots, f_n)$, there exists some codeword $g \in \C^\perp$ such that $f \cdot g = 0$ (recall that `$\cdot$' denotes the standard Euclidean inner product). Thus, we may solve for $g_i f_i$ to obtain
\[g_i f_i = - \sum_{j \neq i} g_j f_j.\]
If there is a family of codewords $\mathcal G = \{g^1, g^2, \ldots, g^m\}$ such that the $i$\th coordinate $\{g_i^1, g_i^2, \ldots, g_i^m\}$ form a basis for $\F_{q^m}$ over $\F_q$, then the trace-dual basis $\{(g_i^j)^* : 1 \leq j \leq m\}$ recovers the lost node through the following equations
\[f_i = \sum_{j = 1}^m \tr(g_i^j f_i) (g_i^j)^*.\]

In practice, the repair depends upon the following procedure. Upon being contacted by the repair node, node j will calculate and send a vector \[(\tr(g_j^1 f_j), \tr(g_j^2 f_j), \ldots \tr(g_j^m f_j)) \in \F_q^m.\]
The advantage of this scheme is that the choice of the collection of dual codewords $\mathcal G \subseteq \C^\perp$ has only one constraint in that $\{g_i^1, g_i^2, \ldots, g_i^m\}$ must form a basis for $\F_{q^m}$ over $\F_q$. With this requirement fulfilled, the MDS property of $\C$ can be used to pick the specific family of codewords, $\mathcal G$, which is of full rank at the $i$\th coordinate, and of low rank in the other coordinates. In other words, $\rank(\{g_i^j : 1 \leq j \leq m\}) = m$, but $\rank(\{g_l^j : l \neq i \text{ and } 1 \leq j \leq m \}) \leq m$. 

The assumption that the code $\C$ is MDS provides the maximum amount of flexibility when designing repair schemes for $\C$. Since any $k$ coordinates uniquely determine the codeword, choose the codewords in $\mathcal G$ which have a basis $\{g_i^j : 1 \leq j \leq m\}$ at coordinate $i$, leaving $k - 1$ coordinates which may be freely chosen. Specific choices for values in these remaining coordinates can be used to reduce the rank of the sets $\{g_{i'}^j : 1 \leq j \leq m \text{ and } i' \neq i\}$. Using this technique, Guruswami and Wootters provide an optimal repair scheme for Reed-Solomon codes $\RS(n,k)$ where $k \leq n(1 - 1/q)$, and a repair scheme for a $\GRS(14,10)$ generalized Reed-Solomon code which improved on the state-of-the-art at the time. This specific generalized Reed-Solomon code was implemented in the Apache Software Foundation Hadoop Distributed File System \cite{HDFS}.

\section{Repair schemes for codes over Galois rings}\label{RepairGalois}
Before examining specific instances of a repair scheme for Reed-Solomon codes over Galois rings, we explore recovery algorithms for codes over Galois rings in general. To this aim, we first define the trace between Galois rings, which by Proposition~\ref{GaloisRingproperties}~(e), generalizes the idea that the field trace $\tr: \F_{q^m} \to \F_q$ is defined as the sum of all the distinct field automorphisms of $\F_{q^m}$ that fixes $\F_q$ (see Remark~\ref{autofields}). We refer the reader to~\cite{GNGF} for a more general treatment of the trace function over other classes of finite rings.
\begin{defn}\label{24.05.23}
Let $S = \gr{p}{n}{ml}$ and $R = \gr{p}{n}{m}$ be Galois rings. The {\bf trace} of $S$ over $R$ is the $R$-linear map $\Tr_R^S: S \rightarrow R$ defined as
\[\Tr_{R}^S(x) = \sum_{i=0}^{l-1} \sigma_m^i(x),\]
where $\sigma_m(x) = x^{p^m}$ is the generator of the cyclic group of automorphisms of $S$ that fix $R$.
\end{defn}
Note that when $S = \gr{p}{n}{m}$ and $R = \gr{p}{n}{1} = \Z_{p^n}$, we have
\[\Tr_{R}^S(x) = \sum_{i=0}^{m-1} \sigma_m^i(x) = \sum_{i=0}^{m-1} x^{p^{i}}.\] We will denote the above trace map from a Galois ring onto its characteristic ring simply by `$\Tr(x)$'.

The following result is a simple computation.
\begin{lemma}
Let $a = \sum_{i = 0}^{n-1} a_ip^i$ be the $p$-adic representation of some element $a $ in $\gr{p}{n}{m}$. Then,

\[\Tr(a) = \sum_{i=0}^{n-1} \tr(a_i) p^i,\]
where $\tr(x)$ denotes the field trace $\tr: \F_{p^m} \rightarrow \F_{p}$ applied to the Teichm\"uller elements.
\end{lemma}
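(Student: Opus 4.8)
The statement to prove relates the Galois ring trace $\Tr$ applied to an element $a = \sum_{i=0}^{n-1} a_i p^i$ with the field trace $\tr$ applied to its Teichm\"uller coefficients. The plan is to reduce everything to a computation using the $p$-adic representation together with the definition of the trace as a sum of powers of the Frobenius-type automorphism $\sigma_m(x) = x^{p^m}$.

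\medskip

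First I would write out the defining sum $\Tr(a) = \sum_{j=0}^{m-1} \sigma_m^j(a) = \sum_{j=0}^{m-1} a^{p^j}$ and substitute the $p$-adic expansion $a = \sum_{i=0}^{n-1} a_i p^i$. The key algebraic input is that the automorphism $\sigma_m$ fixes the characteristic ring $\Z_{p^n}$ pointwise, and in particular fixes $p$; hence each $\sigma_m^j$ acts on a sum $\sum_i a_i p^i$ by $\sum_i \sigma_m^j(a_i)\, p^i$, moving the powers of $p$ outside the automorphism. Interchanging the two sums would then give $\Tr(a) = \sum_{i=0}^{n-1}\left(\sum_{j=0}^{m-1}\sigma_m^j(a_i)\right) p^i$, so the coefficient of $p^i$ is $\sum_{j=0}^{m-1}\sigma_m^j(a_i)$.

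\medskip

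The next step is to identify the inner sum $\sum_{j=0}^{m-1}\sigma_m^j(a_i)$ with $\tr(a_i)$, the field trace of the Teichm\"uller element $a_i \in \mathcal T$. Here I would use the fact that the Teichm\"uller elements are exactly the fixed points of the map $x \mapsto x^{p^m}$ (equivalently, they satisfy $x^{p^m} = x$), so their powers under $\sigma_m^j$ reproduce the field-theoretic Frobenius orbit, and the sum $\sum_{j=0}^{m-1} a_i^{p^j}$ is precisely the field trace $\tr:\F_{p^m}\to\F_p$ evaluated on the reduction of $a_i$, lifted back through the Teichm\"uller correspondence. This is the content of the phrase in the statement ``$\tr(x)$ denotes the field trace applied to the Teichm\"uller elements,'' and it is the point where the multiplicative section $\mathcal T$ interacts cleanly with $\sigma_m$.

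\medskip

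The main obstacle I anticipate is the bookkeeping needed to justify that the result $\sum_{j=0}^{m-1}\sigma_m^j(a_i)$, which a priori is an element of $\Z_{p^n}$, is correctly expressed through the field trace of a Teichm\"uller element rather than merely congruent to it modulo $p$. Concretely, one must confirm that the inner sum lands in $\Z_{p^n}$ (so that the displayed expression is a genuine $p$-adic expansion with $\Z_{p^n}$-coefficients) and that the notation $\tr(a_i)$ is being used consistently as stated. Since $\sigma_m$ fixes $\Z_{p^n}$ and $\Tr$ maps into $\Z_{p^n}$ by $R$-linearity, the total sum is automatically in $\Z_{p^n}$; the genuinely substantive verification is just the orbit identification in the previous paragraph, which follows from $a_i^{p^m} = a_i$ for $a_i \in \mathcal T$. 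Everything else is the routine interchange of finite sums and the $R$-linearity of the automorphisms, so I expect the proof to be short once the Teichm\"uller fixed-point property is invoked.
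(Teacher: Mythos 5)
Your proposal is correct and takes essentially the same route as the paper: the paper's proof simply invokes the $\Z_{p^n}$-linearity of $\Tr$ to pull the $p^i$ out of the trace and then identifies $\Tr(a_i)=\tr(a_i)$ by definition, which is precisely what your expansion into automorphisms, interchange of sums, and Teichm\"uller fixed-point identification unpack in more detail. Your added care in distributing $\sigma^j$ over the $p$-adic expansion via the ring-automorphism property (rather than the literal power map $x \mapsto x^{p^j}$, which is not additive in characteristic $p^n$) is a finer justification of the very linearity step the paper cites, not a different argument.
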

\begin{proof}
As $\Tr(x)$ is a $\Z_{p^n}$-linear map and the $p^i$ are elements of $\Z_{p^n}$, we have
\[\Tr(a) =
\Tr \left(\sum_{i = 0}^{n-1} a_ip^i\right)=
\sum_{i = 0}^{n-1} \Tr \left(a_i\right)p^i.\]
By Definition~\ref{24.05.23}, we have that $\Tr(x) = \tr(x)$. Thus, we obtain the result.
\end{proof}

\begin{rmk}
Recall that the Teichm\"uller set is a collection of coset representatives of the ideal $\langle p \rangle$ in $\gr pnm$. Since we assume $\mathcal T$ is fixed, there is no ambiguity in the images of the field trace $\tr$ on the elements of $\mathcal T$.

\end{rmk}


From now on, we fix the rings $R:= \gr pnm$ and $S:= \gr pn{lm}$. Naturally $R \subseteq S$, and both are algebraic extensions of $\Z_{p^n}$ via polynomials $f$ and $g$ for $R$ and $S$ respectively. Note that $\bar f$ and $\bar g$ are polynomials which generate the fields $\F_{p^m}$ and $\F_{p^{ml}}$, and hence $\bar f \mid \bar g$ in $\Z_p[x]$.

The next lemma shows that given two Galois rings $S \supseteq R$, when we fix a Teichm\"uller set for $S$, we may find a subset which is a Teichm\"uller set for the subring $R$. 
\begin{lemma}\label{teichsubset}
Let $R= \gr pnm$ and $S= \gr pn{lm}$ be Galois rings and fix a Teichm\"uller set $\mathcal T$ of $S$.
    \begin{enumerate}[label={\rm (\alph*)}]
    \item There is some $\alpha \in \mathcal T$ such that $\{0\}\cup\{\alpha^i : 0 \leq i \leq p^m - 2\}$ is a Teichm\"uller set for $R$.
    \item There is some $\beta \in \mathcal T$ such that $\mathcal B = \{\beta^j : 0 \leq j \leq l-1\}$ is a basis for $S$ as a free module over $R$.
    \end{enumerate}
\end{lemma}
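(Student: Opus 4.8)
The plan is to exploit the interplay between the Galois ring $S=\gr{p}{n}{lm}$, its residue field $\F_{p^{lm}}$, and the subring $R=\gr{p}{n}{m}$ with residue field $\F_{p^m}$. The key structural fact is that reduction modulo $\langle p\rangle$ carries Teichm\"uller elements bijectively onto the residue field, so questions about Teichm\"uller sets can be transported to the field level, settled there, and then lifted back using the uniqueness of the Teichm\"uller representatives.

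For part (a), I would start by recalling that the fixed Teichm\"uller set $\mathcal T$ of $S$ consists of $0$ together with the powers $\gamma^i$ of a unit $\gamma\in S$ whose reduction $\bar\gamma$ is a primitive element of $\F_{p^{lm}}$; in particular $\gamma$ has multiplicative order $p^{lm}-1$. Since $\F_{p^m}\subseteq\F_{p^{lm}}$ is a subfield, the multiplicative group $\F_{p^m}^\times$ is the unique cyclic subgroup of $\F_{p^{lm}}^\times$ of order $p^m-1$, and it is generated by $\bar\gamma^{\,(p^{lm}-1)/(p^m-1)}$. I would therefore set $\alpha:=\gamma^{(p^{lm}-1)/(p^m-1)}$, note that $\alpha\in\mathcal T$ because it is a power of $\gamma$, and observe that $\bar\alpha$ is a primitive element of $\F_{p^m}$. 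Then $\{0\}\cup\{\alpha^i:0\le i\le p^m-2\}$ reduces bijectively onto $\F_{p^m}=\{0\}\cup\{\bar\alpha^i\}$, which exhibits it as a set of coset representatives of $\langle p\rangle$ in $R$, i.e.\ a Teichm\"uller set for $R$. The only point requiring care is that $\alpha$ genuinely lies in $R\subseteq S$ and is a unit of $R$; this follows because its order divides $p^m-1$ and, being a Teichm\"uller element reducing into the subfield $\F_{p^m}$, it is fixed by the automorphism $\sigma_m$ generating $\aut(S/R)$, hence lies in the fixed ring $R$.

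For part (b), the strategy is again to reduce to the field case and lift. Over the residue fields, $\F_{p^{lm}}$ is a degree-$l$ extension of $\F_{p^m}$, so there is a primitive element $\bar\beta$ with $\{\bar\beta^{\,j}:0\le j\le l-1\}$ an $\F_{p^m}$-basis of $\F_{p^{lm}}$ (a power basis exists because the extension is separable and simple). I would choose $\beta\in\mathcal T$ to be the Teichm\"uller lift of this $\bar\beta$, and then argue that $\mathcal B=\{\beta^j:0\le j\le l-1\}$ is a basis for $S$ over $R$. Since $S$ is free of rank $l$ over $R$ by Proposition~\ref{GaloisRingproperties}~(d), it suffices to show $\mathcal B$ generates $S$ as an $R$-module; equivalently, by a standard Nakayama-type argument, it suffices to show the images of $\mathcal B$ span $S/\langle p\rangle S\simeq\F_{p^{lm}}$ over $R/\langle p\rangle\simeq\F_{p^m}$, which is exactly the field-level basis statement already established.

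The main obstacle I anticipate is the Nakayama step in part (b): justifying that spanning modulo the maximal ideal implies spanning over $R$, and hence (since $S$ is free of rank $l$ and $\mathcal B$ has $l$ elements) that $\mathcal B$ is in fact a basis rather than merely a generating set. Because $R$ is a local ring with maximal ideal $\langle p\rangle$ and $S$ is a finitely generated $R$-module, Nakayama's lemma applies cleanly, and a cardinality count then upgrades a generating set of size equal to the rank of a free module to a basis. A secondary technical point is verifying in part (a) that the chosen $\alpha$ lies in $R$; I would handle this via the characterization of $R$ as the subring of $S$ fixed by $\sigma_m$, together with the fact that $\sigma_m$ permutes the Teichm\"uller set and fixes precisely those Teichm\"uller elements whose reductions lie in $\F_{p^m}$.
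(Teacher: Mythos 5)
Your proof is correct, and it splits naturally into two comparisons. Part (a) is essentially the paper's own argument: the same element $\alpha = \gamma^{(p^{ml}-1)/(p^m-1)}$, the same observation that $\sigma_m(\alpha) = \alpha^{p^m} = \alpha$ places $\alpha$ in the fixed ring $R$, and the same check that $\bar\alpha$ has order $p^m-1$ and reduces to a primitive element of $\F_{p^m}$. Part (b), however, takes a genuinely different route. The paper works with the presentation $S \simeq R[x]/\langle \rho(x)\rangle$ from Proposition~\ref{GaloisRingproperties}~(d): it shows that $\bar\rho(x)$ divides $\frac{x^{p^{ml}-1}-1}{x^{p^m-1}-1}$ in $\F_{p^m}[x]$ (since $\bar\rho$ is irreducible and cannot divide $x^{p^m}-x$), lifts this factorization via Hensel's Lemma, and concludes that $\rho$ has a root $\beta$ lying in the Teichm\"uller set; the basis property of $\{1,\beta,\ldots,\beta^{l-1}\}$ is then immediate because $S = R[\beta]$ with $\rho$ monic of degree $l$. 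You instead take $\beta \in \mathcal T$ to be the Teichm\"uller lift of a primitive element of the residue field $\F_{p^{lm}}$, so membership in $\mathcal T$ is free, and you pay for the basis property with Nakayama's lemma over the local ring $R$ plus the upgrade from generating set to basis (which is airtight here: since $|R^l| = p^{nml} = |S|$, a surjection $R^l \to S$ between finite sets of equal cardinality is a bijection). The trade-off is clear: the paper's construction buys the basis property structurally but needs polynomial arithmetic and Hensel's Lemma to place $\beta$ in $\mathcal T$, and it identifies $\beta$ explicitly as a root of the defining polynomial; your construction avoids Hensel's Lemma entirely, is more module-theoretic, and generalizes readily to other finite local extensions, at the cost of losing that explicit algebraic description of $\beta$. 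Both of your delicate steps --- the Nakayama reduction to the residue-field basis statement and the cardinality count --- are sound, so the proof stands as written.
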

\begin{proof}
(a) Let $\gamma$ be an element in $S$ such that $\mathcal T := \{0\}\cup\{\gamma^i : 0 \leq i \leq p^{ml}-1\}$ and $\alpha \in \mathcal T$ given by
\[\alpha = \gamma^{\frac{p^{ml} - 1}{p^m - 1}}=\gamma^{p^{m(l-1)} + p^{m(l-2)} + \cdots + p^m +1}.\]
Then, $\sigma_m(\alpha) = \alpha^{p^m}=\alpha$, so $\alpha \in R$. Moreover, it is clear that the order of $\alpha$ is $p^m-1$ in $U(R)$ and the projection of $\alpha$ onto the residue field of $R$ is a primitive element. This proves (a).


(b) Since $S$ is a Galois extension of $R$, by Lemma \ref{GaloisRingproperties} (d), there is a monic basic irreducible polynomial $\rho(x)$ of degree $l$ such that $S \simeq R[x]/\langle\rho(x)\rangle$. Under the mapping down to the residue field of $S$, $\F_{p^m}$, we have that the following
\[\bar \rho(x) \mid (x^{p^{ml}} - x) \text{, hence } \bar \rho(x) \mid \left(x^{p^{m}} - x\right) \cdot \frac{x^{p^{ml} - 1} - 1}{x^{p^{m} - 1} - 1}.\]
Note that $\frac{x^{p^{ml} - 1} - 1}{x^{p^{m} - 1} - 1}$ is a polynomial. 

Since $\bar \rho(x)$ is irreducible over $\F_{p^m}$, $\bar \rho(x)$ cannot divide $x^{p^m} - x$ in $\F_{p^m}[x]$, hence $\bar \rho(x)$ must divide $\frac{x^{p^{ml} - 1} - 1}{x^{p^{m} - 1} - 1}$. Thus, by Hensel's Lemma (\cite{mcdonald}, Theorem XIII.4), there must exist some polynomial $h(x) \in R[x]$ such that \[\rho(x) \cdot h(x) =  \frac{x^{p^{ml} - 1} - 1}{x^{p^{m} - 1} - 1}.\] Choose some $\beta \in  S$, such that $\bar \beta$ is a zero of $\bar \rho(x)$ and $\bar \beta$ is a primitive element of $\F_{p^m}$. Then it's evident that $\beta^i \in \mathcal T$. Therefore, $\B = \{1, \beta, \beta^2, \ldots, \beta^{l-1}\}$ forms a basis for $S$ as a free module over $R$ since $\beta$ is a zero of $\rho(x)$. Furthermore, $\mathcal B \subseteq \mathcal T$ by construction.
\end{proof}

Lemma~\ref{teichsubset} (b) gives us the tools to define subpacketization for codes over rings. Indeed, since $S$ is a free $R$ module, we may write any $s \in S$ uniquely as an $R$-linear combination of $\mathcal B$ 
\[s = \sum_{i} a_i \beta^i.\]

For the Galois ring $R = \gr pnm$, there exists a basis $\B = \{b_1, b_2, \ldots, b_m\}$ for $R$ as a free module over its characteristic ring $\Z_{p^n}$. Furthermore, the following results state that a dual basis exists for $B$ with respect to the trace.
\begin{prop}\cite[Theorem~3.7]{sison20}\label{24.05.24}
Let $\B = \{b_1, b_2, \ldots, b_m\}$ be a basis for the Galois ring $R = \gr pnm$ as a free module over its characteristic ring $\Z_{p^n}$. There exists a unique dual basis $\B^* = \{b_1^*, b_2^*, \ldots, b_m^*\}$ such that
\[\Tr_{\Z_{p^n}}^R(b_i^* b_j) = \delta_{ij}.\]
\end{prop}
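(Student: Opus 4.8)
The plan is to establish the existence and uniqueness of the trace-dual basis by realizing the trace form as a nondegenerate bilinear pairing and then inverting the associated Gram matrix over $\Z_{p^n}$. First I would define the symmetric $\Z_{p^n}$-bilinear form $\langle x, y \rangle := \Tr_{\Z_{p^n}}^R(xy)$ on $R$, viewed as a free $\Z_{p^n}$-module of rank $m$ with the given basis $\B = \{b_1, \ldots, b_m\}$. The whole statement reduces to showing that the Gram matrix $G = (g_{ij})$ with $g_{ij} = \Tr_{\Z_{p^n}}^R(b_i b_j)$ is invertible over $\Z_{p^n}$; granting that, the dual basis is obtained by setting $b_i^* = \sum_j (G^{-1})_{ij} b_j$, and uniqueness follows because the relations $\Tr_{\Z_{p^n}}^R(b_i^* b_j) = \delta_{ij}$ determine the coordinate vectors of the $b_i^*$ uniquely once $G$ is invertible.

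The key step, therefore, is to prove that $G$ is invertible in $M_m(\Z_{p^n})$. Since $\Z_{p^n}$ is a local ring with maximal ideal $\langle p \rangle$ and residue field $\F_p$, a matrix over $\Z_{p^n}$ is invertible if and only if its reduction modulo $p$ is invertible over $\F_p$ (equivalently, $\det G$ is a unit, i.e.\ $\det G \not\equiv 0 \pmod p$). So I would pass to the residue field: applying the canonical projection $\bar\cdot : R \to \F_{p^m}$ to the basis $\B$ yields an $\F_p$-spanning set $\{\bar b_1, \ldots, \bar b_m\}$ of $\F_{p^m}$, which must in fact be an $\F_p$-basis since $R$ is free of rank $m$ over $\Z_{p^n}$ and reduction preserves the module-theoretic generation count (a basis of a free module reduces to a basis of the quotient $R/pR \simeq \F_{p^m}$). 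The reduction of $G$ modulo $p$ is precisely the field-trace Gram matrix $\bar g_{ij} = \tr(\bar b_i \bar b_j)$, using compatibility of the Galois-ring trace with the field trace under reduction (this is the content of the earlier lemma relating $\Tr$ and $\tr$).

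At this point the main obstacle becomes the classical fact that the field trace form $\F_{p^m} \times \F_{p^m} \to \F_p$, $(x,y) \mapsto \tr(xy)$, is nondegenerate; equivalently, its Gram matrix with respect to any $\F_p$-basis is invertible. This is precisely the separability of $\F_{p^m}/\F_p$ and is exactly what guarantees the existence of the finite-field trace-dual basis cited earlier in the paper (following Remark~\ref{autofields} and the Guruswami--Wootters discussion). I would invoke this as the crux: nondegeneracy of $\bar G$ over $\F_p$ gives $\det \bar G \neq 0$, hence $\det G$ is a unit in $\Z_{p^n}$, hence $G \in \mathrm{GL}_m(\Z_{p^n})$.

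The remaining work is routine bookkeeping. With $G$ invertible I set $b_i^* := \sum_{k=1}^m (G^{-1})_{ik} b_k$ and verify directly that
\[
\Tr_{\Z_{p^n}}^R(b_i^* b_j) = \sum_{k=1}^m (G^{-1})_{ik} \Tr_{\Z_{p^n}}^R(b_k b_j) = \sum_{k=1}^m (G^{-1})_{ik} g_{kj} = \delta_{ij},
\]
using $\Z_{p^n}$-linearity of the trace in the first variable. For uniqueness, any candidate dual basis $\{c_i\}$ has coordinate vectors satisfying the same linear system $G \cdot (\text{coords of } c_i) = e_i$, whose solution is unique because $G$ is invertible; hence $c_i = b_i^*$ for all $i$. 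The only genuinely nontrivial input is the separability/nondegeneracy fact at the residue-field level, and the whole argument hinges on the local-ring principle that unit determinant is detected modulo $p$, so I expect essentially no obstacle beyond correctly matching the reduced Gram matrix with the field-trace form.
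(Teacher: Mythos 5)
Your proof is correct, but note that the paper itself offers no proof of this proposition to compare against: it is quoted verbatim from the cited reference \cite[Theorem~3.7]{sison20}, so your argument is filling a gap the authors delegated to the literature. What you wrote is the standard (and essentially inevitable) argument, and every step checks out: the Gram matrix $G = \bigl(\Tr_{\Z_{p^n}}^R(b_i b_j)\bigr)$ reduces modulo $p$ to the field-trace Gram matrix of the $\F_p$-basis $\{\bar b_1,\ldots,\bar b_m\}$ of $\F_{p^m}$ (this compatibility does follow from the paper's Lemma~4.2, or directly from the fact that the generator of $\aut(R/\Z_{p^n})$ reduces to the Frobenius on the residue field); nondegeneracy of the field trace form makes that reduction invertible; and since $\Z_{p^n}$ is local, a unit determinant modulo $p$ lifts to a unit determinant, giving $G \in \mathrm{GL}_m(\Z_{p^n})$. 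Your construction $b_i^* = \sum_k (G^{-1})_{ik} b_k$ and the uniqueness argument via the linear system $x_i^T G = e_i^T$ are both sound; the only cosmetic omission is the one-line remark that the $b_i^*$ themselves form a basis (immediate, since they are obtained from $\B$ by the invertible matrix $G^{-1}$), which the statement's phrase ``dual basis'' implicitly requires. In short: a correct, self-contained proof of a result the paper only cites, obtained by the natural route of lifting nondegeneracy of the trace pairing from the residue field through the local structure of $\Z_{p^n}$.
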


\begin{rmk}
It should be noted that if $S$ is non-commutative, then for a basis $\B$ which generates $S$ as a free left $R$-module, $_{R}S$, the dual basis $\B^*$ is a basis for the free {\it right} $R$-module, $S_{R}$. This distinction is unnecessary for the present work because we assume that the rings are commutative.
\end{rmk}

Let $\mathcal C \subseteq R^n$ be a free MDS code over the Galois ring $R = \gr pnm$, with length $n$, rank $k$, and (due to meeting the Singleton bound) minimum distance $d = n - k + 1$. Since this code $\C$ is a free code over a principal ideal ring, $\C$ has an MDS dual code $\C^{\perp}$ of length $n$ and rank $n - k$~\cite{CodesoverChain}.

\begin{rmk}
The assumption that $\mathcal C$ is MDS implicitly assumes that the residue field of the finite local ring $R$ with maximal ideal $\mathfrak{m}$ is sufficiently large to admit the MDS property. In~\cite{Dougherty}, Dougherty, Kim, and Kulosman show that the assumption that
\[|{R/\mathfrak{m}}| > \binom{n-1}{n-k-1}\] is sufficient.
\end{rmk}

As a consequence of Proposition~\ref{24.05.24}, we note that dual bases for Galois rings retain many of the same properties as their finite field counterparts. In particular, for every element $a$ in $R = \gr pnm$, the following decomposition holds
\[a = \sum_i \Tr_{\Z_{p^n}}^R(b_i^* a) b_i.\]

By construction of $S= \gr pn{lm}$ as a Galois extension of $R$, and due to Lemma \ref{teichsubset}, there is a subset of the Teichm\"uller set of $S$ which is a basis for $S$ as a free module over $R$, and a corresponding dual basis with respect to the trace function $\Tr_R^S: S \to R$. This leads to the following result, which mirrors Proposition 6 and Corollary 7 in \cite{gw17}.

\begin{cor}
Let $R= \gr pnm$ and $S= \gr pn{lm}$ be Galois rings and $\B = \{b_1, b_2, \ldots, b_m\}$ a basis for $S$ as a free module over $R$. Then, for any $a, \zeta \in S$, there exist unique scalars $\mu_{i,\zeta}$ from $R$ such that \[\zeta a = \sum_{i}\mu_{i,\zeta} b_i.\]
\end{cor}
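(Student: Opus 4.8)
The plan is to reduce the statement to the defining property of an $R$-basis, and then to make the coefficients explicit by means of the trace-dual basis established in the preceding discussion. First I would observe that since $S = \gr pn{lm}$ is commutative, left multiplication $L_\zeta \colon a \mapsto \zeta a$ is an $R$-linear endomorphism of $S$; indeed $\zeta(ra) = r(\zeta a)$ for every $r \in R$ because $R \subseteq S$ is central. In particular $\zeta a$ is again an element of $S$, no matter which $a, \zeta$ are chosen.

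Next, since $\B$ is, by hypothesis (and by Lemma~\ref{teichsubset}(b)), a basis for $S$ as a free $R$-module, every element of $S$ — and hence $\zeta a$ in particular — admits one and only one expression as an $R$-linear combination of $\B$. Existence is just the spanning property, and uniqueness is immediate from linear independence: if $\sum_i \mu_{i,\zeta} b_i = \sum_i \mu_{i,\zeta}' b_i$, then $\sum_i (\mu_{i,\zeta} - \mu_{i,\zeta}') b_i = 0$ forces $\mu_{i,\zeta} = \mu_{i,\zeta}'$ for all $i$. This already yields the scalars $\mu_{i,\zeta} \in R$ asserted by the corollary.

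To match the finite-field formulation of Guruswami and Wootters and to make the $\mu_{i,\zeta}$ usable in a repair scheme, I would then identify them explicitly via the trace. Invoking the trace-dual basis $\B^* = \{b_i^*\}$ for the extension $S/R$ — the analog for $\Tr_R^S$ of Proposition~\ref{24.05.24}, whose existence was recorded in the paragraph preceding the statement — the decomposition $s = \sum_i \Tr_R^S(b_i^* s)\, b_i$ applied to $s = \zeta a$ gives $\mu_{i,\zeta} = \Tr_R^S(b_i^* \zeta a)$.

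The only point that requires any care, and it is mild, is that Proposition~\ref{24.05.24} is stated for the extension $R/\Z_{p^n}$, whereas here I need the dual-basis decomposition for the extension $S/R$ with respect to $\Tr_R^S$; this is exactly the statement noted just before the corollary, so I would simply cite it rather than reprove it. Beyond that transfer there is no genuine obstacle: the existence and uniqueness of the $\mu_{i,\zeta}$ are a direct consequence of $\B$ being an $R$-basis, and the explicit trace formula is a one-line substitution.
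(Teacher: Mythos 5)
Your proof is correct and takes essentially the same route as the paper: the key point in both is that $\zeta a \in S$ and $S$ is free over $R$ with basis $\B$ (Lemma~\ref{teichsubset}(b)), so existence and uniqueness of the $\mu_{i,\zeta}$ follow immediately from freeness. Your additional explicit identification $\mu_{i,\zeta} = \Tr_R^S(b_i^* \zeta a)$ via the trace-dual basis goes beyond what the paper's one-line proof records, but it is consistent with the surrounding discussion and does not change the underlying argument.
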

\begin{proof}
By Lemma~\ref{teichsubset}~(b), $S$ may be considered to be a free $R$-module; existence and uniqueness follow quickly from freeness since $a\zeta \in S$. 
\end{proof}

We come to one of the main results of this section, which ensures the existence of a repair scheme over a Galois ring.
\begin{thm}\label{linear repair}
Any free MDS $[n,k]$ code over the Galois ring $S= \gr pn{lm}$ admits a linear exact repair scheme over the Galois ring $R= \gr pnm$.
\end{thm}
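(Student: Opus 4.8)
The plan is to mirror the Guruswami--Wootters construction from Section~\ref{MDScodes}, but to replace the finite-field trace $\tr:\F_{q^m}\to\F_q$ with the Galois-ring trace $\Tr_R^S:S\to R$ of Definition~\ref{24.05.23}, and to replace the trace-dual basis of finite fields with the dual basis guaranteed by Proposition~\ref{24.05.24}. Concretely, let $\C\subseteq S^n$ be the free MDS $[n,k]$ code and let $\C^\perp\subseteq S^n$ be its dual, which by the discussion preceding the theorem is again free and MDS of rank $n-k$. For a codeword $f=(f_1,\dots,f_n)\in\C$ and any $g\in\C^\perp$ we have $\sum_j g_jf_j=0$, so the failed symbol $f_i$ satisfies
\[
g_if_i=-\sum_{j\neq i}g_jf_j.
\]
The whole scheme then rests on producing a family $\mathcal G=\{g^1,\dots,g^{m}\}\subseteq\C^\perp$ whose $i$\th coordinates $\{g_i^1,\dots,g_i^{m}\}$ form a basis for $S$ as a free $R$-module.

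Assuming such a family exists, the recovery is a formal computation. First I would invoke Lemma~\ref{teichsubset}~(b) to realize $S$ as a free $R$-module, and apply Proposition~\ref{24.05.24} (in the form of the decomposition $a=\sum_i\Tr_R^S(b_i^*a)\,b_i$ displayed just before the theorem) to the element $f_i\in S$ relative to the basis $\{g_i^j\}$ with its dual basis $\{(g_i^j)^*\}$. This gives
\[
f_i=\sum_{j=1}^{m}\Tr_R^S\!\left(g_i^j f_i\right)(g_i^j)^*.
\]
Each coefficient $\Tr_R^S(g_i^jf_i)=\Tr_R^S\!\big(-\sum_{l\neq i}g_l^jf_l\big)=-\sum_{l\neq i}\Tr_R^S(g_l^jf_l)$ by $R$-linearity of the trace, so helper node $l$ only needs to transmit the $R$-elements $\Tr_R^S(g_l^jf_l)$ for $1\le j\le m$ rather than the full symbol $f_l\in S$. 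The repair node reassembles $f_i$ from these traces and the fixed dual basis, which establishes that the scheme is both linear over $R$ and exact.

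The main obstacle is the existence of the family $\mathcal G$: I must show that one can choose $m$ codewords of $\C^\perp$ that are simultaneously full-rank at coordinate $i$ (i.e.\ their $i$\th coordinates generate $S$ freely over $R$). Here is where the MDS hypothesis does the work. Since $\C^\perp$ is MDS of rank $n-k$, by the parity-check/generator-matrix description any $n-k$ coordinates of a dual codeword can be prescribed freely and uniquely determine the codeword; in particular the evaluation map $\C^\perp\to S^{\,n-k}$ onto any $n-k$ coordinates, including coordinate $i$ among them, is an $R$-module isomorphism onto $S^{\,n-k}$. I would therefore select $g^1,\dots,g^m\in\C^\perp$ whose $i$\th coordinates are a prescribed $R$-basis of $S$ (such a basis exists by Lemma~\ref{teichsubset}~(b)), which is possible precisely because $n-k\ge 1$ and the MDS property lets us fix the $i$\th coordinate to any desired value. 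The one point requiring care over a ring rather than a field is that ``full rank at coordinate $i$'' must mean the $i$\th coordinates form a \emph{free} generating set for $S$ over $R$, not merely that they span; but this is automatic from our construction because we prescribe them to be an actual basis. Thus the existence of $\mathcal G$ reduces to the freeness and MDS properties already in hand, and the theorem follows.
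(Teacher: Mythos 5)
Your proposal is correct and follows essentially the same route as the paper's own proof: write the failed symbol via $g\cdot f=0$ for dual codewords, use the fact that the dual of a free MDS code is free MDS to select codewords of $\C^\perp$ whose coordinates at the failed position form an $R$-basis of $S$, and reconstruct $f_i$ from the transmitted traces using the trace-dual basis for $\Tr_R^S$. The only substantive difference is that you explicitly justify the existence of the family $\mathcal G$ via the projection-onto-$(n-k)$-coordinates isomorphism, a step the paper simply asserts from the MDS property, so your argument is, if anything, more complete than the one in the paper.
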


\begin{proof}
Let $Z \subseteq S$ be a set that freely generates $S$ as an $R$-module. Suppose we have lost the $i$\th coordinate $c_i^*$ of a codeword $c \in \C$. Since $\C$ is a free MDS code, its dual code is a free $[n,n-k]$ MDS code $\C^\perp$. Let $g \in \C^\perp$ so that $g \cdot c = 0$, then we may write the unknown coordinate
\[g_i c_i^* = -\sum_{j \neq i} g_jc_j.\]
Taking the trace of both sides and using its linearity gives 
\[\tr(g_ic_i^*) =- \sum_{j \neq i} \tr(g_j c_j).\]
Since $\C^\perp$ is MDS, it is possible to pick $m$ codewords $g^1, g^2, \ldots, g^m \in \C^\perp$ such that $\{g_i^j : 1 \leq j \leq m\} = Z$, that is, the $i$\th coordinates of these codewords forms a set which generates $R$ as a $\Z_{p^n}$-module. Then there exists a trace-dual basis $\{\nu_i^j : 1 \leq j \leq m\}$. such that $\tr(g_i^j \nu_{i}^{j'}) = \delta_{jj'}$.

Thus for each $j$ the repair node may reconstruct each value $\tr(g_i^j c_i^*)$; these trace values can then be used to find 
\[c_i^* = \sum_{i=1}^n \tr(g_i^j c_i^*) \nu_i^j.\]
Thus, we have the desired result.
\end{proof}

\section{Repair schemes for Reed-Solomon codes over a Galois Ring}\label{RSGalois}
In this section, we define a repair scheme for a Reed-Solomon code over a Galois ring.

Let $R= \gr pnm$ and $S= \gr pn{lm}$ be Galois rings. The Teichm\"uller set of the Galois ring $S$ is denoted by $\mathcal{T}=\{0\} \cup \{\gamma^{i - 1} : 1 \leq i \leq p^{ml} -1\}$, (see Subsection~\ref{24.05.25}). We define the full-length Reed-Solomon code over the Galois ring $S$ by taking the evaluation points as the Teichm\"uller set $\mathcal{T}$. In other words, the {\bf full-length} Reed-Solomon code over $S$ is denoted and defined by 
\[\RS(\mathcal{T},k) := \{(f(\alpha_0), f(\alpha_2), \ldots, f(\alpha_{p^{lm}})): f(x) \in R[x] \text{ and } \deg(f) < k\},\]
where $\mathcal{T} = \{\alpha_0, \alpha_1, \ldots, \alpha_{p^{lm} - 1}\}.$

We define $q = p^m$ so that $|R| = q^n = p^{nm}$ and $|S| = q^{nl} = p^{nml}$. As the nonzero elements of the Teichm\"uller set of $S$ form a cyclic group of order $p^{ml}-1$, the proof of the following lemma is identical to its finite field counterpart.
\begin{lemma}\label{fulldual}
Let $\mathcal C = \RS(\mathcal{T},k)$ be the full-length Reed-Solomon code over the Galois ring $S= \gr pn{lm}$.
Then, the dual of $\mathcal{C}$ is given by $\C^\perp = \RS(\mathcal{T}, n - k)$.
\end{lemma}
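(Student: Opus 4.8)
The plan is to mirror the classical finite-field computation of the dual of a full-length Reed-Solomon code, checking at each step that the ring-theoretic ingredients behave as they do over a field. The code $\mathcal{C} = \RS(\mathcal{T},k)$ has length $N = p^{ml}$ and rank $k$, so by Proposition~\ref{RSMDS}~(c) its dual is a generalized Reed-Solomon code $\RS_{\boldsymbol{v}'}(N, N-k)$ with the \emph{same} evaluation set $\mathcal{T}$ and some vector of multipliers $\boldsymbol{v}' \in U(S)^N$. Since $\RS_{\boldsymbol{v}'}(N,N-k)$ and $\RS(\mathcal{T}, N-k)$ have the same rank $N-k$, it suffices to show that the multipliers $\boldsymbol{v}'$ can be taken to be all $1$; equivalently, I would show directly that $\RS(\mathcal{T}, N-k) \subseteq \mathcal{C}^\perp$, which forces equality by a rank/cardinality count.

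To exhibit the containment, I would verify the orthogonality of a spanning set: for every $f$ with $\deg f < k$ and every $g$ with $\deg g < N-k$, the inner product $\sum_{\alpha \in \mathcal{T}} f(\alpha) g(\alpha)$ vanishes. By $R$-bilinearity this reduces to power sums, so the crux is the identity
\[
\sum_{\alpha \in \mathcal{T}} \alpha^t = 0 \quad \text{for } 1 \le t \le N-2,
\]
together with the fact that $\sum_{\alpha \in \mathcal{T}} \alpha^{N-1}$ is a unit (and handling $t=0$). The key observation, flagged in the lemma statement, is that the nonzero elements of $\mathcal{T}$ form a cyclic group of order $N - 1 = p^{ml}-1$ under multiplication, generated by $\gamma$. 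Hence the power sum over $t$ with $1 \le t \le N-2$ is a geometric-type sum $\sum_{i=0}^{N-2} \gamma^{it}$, and since $\gamma^t \neq 1$ while $(\gamma^t - 1)$ is a unit (as $\mathcal{T}$ is subtractive, being a Teichm\"uller set), the telescoping identity $(\gamma^t - 1)\sum_{i=0}^{N-2}\gamma^{it} = \gamma^{t(N-1)} - 1 = 0$ forces the sum to be zero. This is exactly where the finite-field proof transfers verbatim, because all the needed inverses exist in $S$.

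I anticipate the main obstacle to be purely bookkeeping rather than conceptual: one must confirm that the degree budget matches, namely that $\deg(fg) \le (k-1) + (N-k-1) = N-2$, so that no exponent $t$ reaching the critical value $N-1$ ever appears in the products, and hence the problematic unit power sum is never invoked when checking orthogonality. I would then pair this with the cardinality/rank argument: $\RS(\mathcal{T}, N-k)$ is free of rank $N-k$ by Proposition~\ref{RSMDS}~(b), the two codes $\mathcal{C}^\perp$ and $\RS(\mathcal{T}, N-k)$ both have rank $N-k$ and the same size $|S|^{N-k}$, and a free submodule of a free module of equal finite cardinality over the finite ring $S$ must coincide with it. Combining the containment with the equality of sizes yields $\mathcal{C}^\perp = \RS(\mathcal{T}, N-k)$, completing the proof. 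The only subtlety to state carefully is that the subtractivity of $\mathcal{T}$ (guaranteed because it is a Teichm\"uller set of $S$, whose distinct nonzero elements differ by a unit) is precisely the hypothesis that makes $\RS(\mathcal{T}, N-k)$ a legitimate Reed-Solomon code in the sense of the earlier definitions, so I would note this explicitly to justify that the dual is again a genuine full-length Reed-Solomon code with trivial multipliers.
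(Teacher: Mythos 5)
Your overall route is the same one the paper takes: the paper's entire ``proof'' is the remark that, because the nonzero Teichm\"uller elements form a cyclic group of order $p^{ml}-1$, the finite-field computation carries over. Your treatment of the exponents $1 \le t \le N-2$ (subtractivity of $\mathcal{T}$, the telescoping identity with the unit $\gamma^t - 1$, the degree budget $\deg(fg)\le N-2$, and the containment-plus-cardinality count) is correct and is in fact more detailed than anything the paper records.

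However, you flag ``handling $t=0$'' and then never do it, and that is precisely the one step where the finite-field proof does \emph{not} transfer verbatim. Over $\F_q$ the constant power sum is $\sum_{\alpha\in\F_q} 1 = q\cdot 1 = 0$ because the characteristic $p$ divides $q$. Over $S=\gr pn{lm}$ the corresponding sum is $\sum_{\alpha\in\mathcal{T}} 1 = p^{ml}\cdot 1_S$, and $S$ has characteristic $p^n$, so this vanishes if and only if $n \le ml$; your telescoping argument cannot rescue this case, since $\gamma^{0}-1=0$ is not a unit. The issue is unavoidable: for $k\ge 1$ and $N-k\ge 1$ the all-ones vector lies in both $\RS(\mathcal{T},k)$ and $\RS(\mathcal{T},N-k)$, so the claimed duality forces $p^{ml}\cdot 1_S=0$. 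Concretely, for $S=\gr 221 = \Z_4$ one has $\mathcal{T}=\{0,1\}$, and with $k=1$ the codeword $(1,1)\in\RS(\mathcal{T},1)$ satisfies $(1,1)\cdot(1,1)=2\neq 0$ in $\Z_4$, so $\RS(\mathcal{T},1)^\perp\neq\RS(\mathcal{T},1)$. Thus your proof (and indeed the lemma itself) needs the hypothesis $ml\ge n$; it holds in the paper's running example ($m=1$, $l=3$, $n=2$) but is not implied by the lemma's statement. This gap is shared with the paper's own one-line proof, but since your write-up explicitly promises to handle $t=0$, you must make the characteristic computation and the inequality $ml \ge n$ explicit for the argument to close.
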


Lemma~\ref{fulldual} is essential for the construction of the set of linearly independent codewords $\{g_i: 1 \leq i \leq l\} \subseteq \C^\perp$ as in the proof of Theorem~\ref{linear repair}. With this in mind, suppose that a file of rank $k$ over $S$ is encoded using the full-length Reed-Solomon code $\RS(\mathcal{T},k)$, where $\mathcal{T}$ is the Teichm\"uller set of $S$. Let $f(x) \in S[x]_{< k}$ be the message polynomial associated with the file, which has a corresponding codeword
\[\begin{array}{rl}f(\mathcal{T}) = \left(f(\alpha_0), f(\alpha_1), \ldots, f(\alpha_{p^{ml} - 1} \right),\end{array}\]
which may be denoted by $f(\mathcal{T}) = \left(f_0, f_1, \ldots, f_{p^{ml} - 1}\right)$ for simplicity. Now, assume the information at node $i$, namely $f_i \in S$, is erased. Then as in Theorem \ref{linear repair}, we fix a basis $\B = \{\beta_1, \beta_1, \ldots, \beta_l\}$ for $S$ as a free $R$-module, and, since $S$ is a Galois extension of $R$, let $\widehat{\B} = \{\widehat \beta_1, \widehat \beta_2, \ldots, \widehat \beta_l\}$ be the trace dual basis for $\B$. Furthermore, for ease of notation, we will write $\Tr_{R}^{S}$ as simply $\Tr$. We introduce the following family of $l$ repair polynomials associated with $i$:
\begin{equation}\label{24.05.27}
\mathcal P(\alpha_i) = \left\{p_{i,j}(x) = \frac{\Tr(\beta_j(x - \alpha_i))}{x - \alpha_i} : 1 \leq j \leq l \right\}.
\end{equation}
Note that $p_{i,j}(x) = \sum_{a = 0}^{l-1} (\beta_j)^{q^a}(x - \alpha_i))^{q^a-1} - 1$; hence the polynomial $p_{i,j}(x)$ has degree $q^{l-1}$. Following the results of Proposition \ref{RSMDS}, so long as $\deg(p_j(x)) \leq n-k$, $p_{i,j}(x)$ is the message polynomial corresponding to some codeword in $\RS(\mathcal{T},k)^\perp = \RS(\mathcal{T},n-k)$. Furthermore $p_{i,j}(\alpha_i) = \beta_j$, while $p_{i,j}(\alpha_j) = \frac{\gamma}{\alpha_j - \alpha_i}$ for $\gamma \in R$, since $\gamma$ is the image of the $\Tr$.

We come to one of the main theorems of this section, which is a linear exact repair scheme for a full-length Reed-Solomon code over a Galois ring.
\begin{thm}
Let $R= \gr pnm$ and $S= \gr pn{lm}$ be Galois rings and $\mathcal T$ the Teichm\"uller set of $S$. If $k$ and $l$ are such that $k \leq p^{ml}(1 - \frac{1}{p^m})$, then there exists a linear exact repair scheme for the full-length Reed-Solomon code $\RS(\mathcal{T},k)$ with a repair bandwidth of $|\mathcal T|-1$ over $R$.
\end{thm}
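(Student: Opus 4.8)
The plan is to realize the abstract scheme of Theorem~\ref{linear repair} explicitly, drawing the needed dual codewords from the family of repair polynomials $\mathcal{P}(\alpha_i)$ in~\eqref{24.05.27} and showing that the hypothesis $k \leq p^{ml}(1 - 1/p^m)$ is precisely the condition under which those polynomials land in $\C^\perp$. Write $q = p^m$ and $n = \abs{\mathcal{T}} = p^{ml}$, suppose the symbol $f_i = f(\alpha_i) \in S$ at node $i$ is erased, fix the basis $\B = \{\beta_1, \ldots, \beta_l\}$ of $S$ over $R$ from Lemma~\ref{teichsubset}~(b) together with its trace-dual basis $\widehat{\B} = \{\widehat\beta_1, \ldots, \widehat\beta_l\}$, and form the $l$ repair polynomials $p_{i,j}(x)$.

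First I would verify that each $p_{i,j}$ encodes a dual codeword. Since $\beta_j(x-\alpha_i)$ vanishes at $x=\alpha_i$ and $\Tr$ is a polynomial map, $(x-\alpha_i)$ divides $\Tr(\beta_j(x-\alpha_i))$, so $p_{i,j} \in S[x]$; its top-degree term is $\beta_j^{q^{l-1}}(x-\alpha_i)^{q^{l-1}-1}$, whence $\deg p_{i,j} = q^{l-1}-1$, and the hypothesis is exactly the inequality $\deg p_{i,j} < n-k$. Thus $p_{i,j}$ is a message polynomial for $\RS(\mathcal{T},n-k) = \C^\perp$ by Lemma~\ref{fulldual}. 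Evaluating, only the $a=0$ term survives at $x=\alpha_i$, so the corresponding codeword $g^j = (p_{i,j}(\alpha_0),\ldots,p_{i,j}(\alpha_{n-1}))$ has $i$th coordinate $g_i^j = p_{i,j}(\alpha_i) = \beta_j$; hence $\{g_i^j : 1 \le j \le l\} = \B$ is a free generating set of $S$ over $R$ at the lost coordinate, as required by the proof of Theorem~\ref{linear repair}. Orthogonality $g^j \cdot f = 0$ then gives, after applying $\Tr$ and its $R$-linearity,
\[\Tr(\beta_j f_i) = -\sum_{t \neq i} \Tr(g_t^j f_t), \qquad 1 \le j \le l,\]
and once the left-hand values are known for all $j$, the trace-dual expansion $f_i = \sum_{j=1}^l \Tr(\beta_j f_i)\,\widehat\beta_j$ recovers the erased symbol.

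The crux, and the source of the optimal bandwidth, is the evaluation of the downloaded quantities. For $t \neq i$ the difference $\alpha_t - \alpha_i$ is a unit of $S$ because $\mathcal{T}$ is subtractive, so $g_t^j = p_{i,j}(\alpha_t) = c_{t,j}(\alpha_t-\alpha_i)^{-1}$, where $c_{t,j} := \Tr(\beta_j(\alpha_t-\alpha_i)) \in R$ depends only on public data. Since $c_{t,j} \in R$, the $R$-linearity of $\Tr$ lets me pull it outside the trace:
\[\Tr(g_t^j f_t) = c_{t,j}\,\Tr\!\left((\alpha_t-\alpha_i)^{-1} f_t\right) = c_{t,j}\,\mu_t, \qquad \mu_t := \Tr\!\left((\alpha_t-\alpha_i)^{-1} f_t\right) \in R.\]
Thus for fixed $t$ the $l$ values $\{\Tr(g_t^j f_t)\}_j$ all lie in the rank-one $R$-module $R\mu_t$, so node $t$ need only transmit the single element $\mu_t \in R$, from which the repair node reconstructs every $\Tr(g_t^j f_t)$ and hence, by the displayed identity, every $\Tr(\beta_j f_i)$. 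Downloading one element of $R$ (one subsymbol over $R$) from each of the $\abs{\mathcal{T}}-1$ helper nodes yields the claimed repair bandwidth of $\abs{\mathcal{T}}-1$ over $R$.

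I expect the main obstacle to be bookkeeping rather than any deep difficulty: the one genuinely load-bearing point is the degree count, which must be pinned down so that $\deg p_{i,j} < n-k$ is \emph{equivalent} to the stated bound on $k$, together with the rank-one collapse, which hinges on two ring-specific facts replacing their field analogues—that $\Tr$ maps into $R$ (so each $c_{t,j}$ is a genuine $R$-scalar extractable by $R$-linearity) and that subtractivity of $\mathcal{T}$ makes each $\alpha_t - \alpha_i$ invertible in $S$. With these in place, the argument is the Guruswami--Wootters scheme transported to the Galois-ring setting.
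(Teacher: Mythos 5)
Your proposal is correct and follows essentially the same route as the paper's proof: the same repair polynomials $p_{i,j}$, the same degree check for membership in $\C^\perp = \RS(\mathcal{T},n-k)$, the same trace-dual expansion of $f_i$, and the same extraction of the $R$-scalars $\Tr(\beta_j(\alpha_t - \alpha_i))$ from the trace so that each helper node transmits only the single element $\Tr\left(f_t(\alpha_t-\alpha_i)^{-1}\right) \in R$. Two of your refinements are in fact improvements in precision: the degree count $\deg p_{i,j} = q^{l-1}-1$ is the correct one (the paper states $q^{l-1}$, an off-by-one slip that cancels against its loose condition $\deg p_{i,j} \leq n-k$, so the final hypothesis on $k$ agrees), and your explicit appeal to subtractivity of $\mathcal{T}$ justifies the inversion of $\alpha_t - \alpha_i$ that the paper performs implicitly.
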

\begin{proof}
Let $f(\mathcal{T}) = \left(f_0, f_1, \ldots, f_{p^{ml} - 1}\right)$ be an element of the full-length Reed-Solomon code $\RS(\mathcal{T},k)$.
Suppose that the information $f_i$ stored at node $i$ is erased. Let $k$ and $l$ be such that $k \leq p^{ml}(1 - \frac{1}{p^m})$. By the discussion before the theorem, every element in the family of repair polynomials given in Eq.~(\ref{24.05.27}) $\mathcal P(\alpha_i)$ defines an element in $\RS(\mathcal{T},k)^\perp = \RS(\mathcal{T},n-k)$. Then, for each $1 \leq j \leq l$, we can write the repair equation

\begin{align*}
p_{i,j}(\alpha_i) f_i &= - \sum \limits_{i' \neq i}p_{i,j}(\alpha_{i'}) f_{i'}.
\end{align*}
Hence
\begin{align*}
\beta_i f_i &= -\sum \limits_{i' \neq i} \left(\frac{\Tr(\beta_j(\alpha_{i'} - \alpha_i))}{\alpha_{i'} - \alpha_i}\right) f_{i'}.
\end{align*}
Note that the right-hand side of the last equality does not depend upon the lost information at node $i$.

Recall that the trace-dual basis gives an equation for $f_i$.
\begin{equation}\label{full repair}f_i = \sum_{j = 1}^l  \Tr(\beta_j f_i) \widehat{\beta_j}.\end{equation}
The value $\Tr(\beta_j f_i)$ may be rewritten as
\begin{align*}
\Tr(\beta_j f_i) &= \Tr\left(-\sum \limits_{i' \neq i} \left(\frac{\Tr\left(\beta_j(\alpha_{i'} - \alpha_i)\right)}{\alpha_{i'} - \alpha_i}\right) f_{i'}\right) \\
&= -\sum_{i' \neq i}\Tr\left(\frac{f_{i'}}{\alpha_{i'} - \alpha_i} \Tr\left(\beta_j(\alpha_{i'} - \alpha_i)\right)\right)\\
&= -\sum_{i' \neq i} \Tr\left( \frac{f_{i'}}{\alpha_{i'} - \alpha_i} \right) \Tr\left(\beta_j(a_{i'} - \alpha_i)\right).
\end{align*}
The last equality follows since the trace $\Tr: S \to R$ is $R$-linear and surjective. Thus, we obtain a linear exact repair scheme since we may write $f_i = \sum \limits_{j = 1}^l\Tr(\beta_jf_i) \widehat \beta_j$.

Note that the elements $\Tr\left(\beta_j(a_{i'} - \alpha_i)\right)$ do not depend upon the encoded message $f(\mathcal{T})$, so the linear exact repair scheme needs only to send $\Tr\left(\frac{f_{i'}}{\alpha_{i'} - \alpha_i}\right)$, which are elements of $R$. This amounts to a total communication of $n-1$ symbols of $R$, the claimed repair bandwidth.
\end{proof}
To illustrate the above theorem, consider the following example.

\begin{ex}
Let $R = \Z_4$ and $S = \gr{2}{2}{3}$, where $\gr 223 = \Z_4[x]/\langle f(x)\rangle$ with $f(x) = x^3 + 3x + 3$. The ring $S$ has a residue field isomorphic to $\F_2[x]/\langle x^3 + x + 1\rangle$. 
Then, $\mathcal T = \{0\}\cup\{\gamma^i : 0 \leq i \leq 6\}$, where $\gamma$ satisfies $\gamma^3 = \gamma + 1$. The trace is given by the polynomial $\Tr^S_R(x) = x + x^2 + x^4$. Certainly $\B = \{1, \gamma, \gamma^2\}$ is a basis for $S$ over $R$ which has a corresponding dual basis $\widehat{\B} = \{1, \gamma^2, \gamma\}$. Note that while $\widehat \B$ is a re-ordering of $\B$, the ordering matters for a trace-dual basis, i.e. $\B$ and $\widehat{\B}$ are distinct.

Now define a full length Reed-Solomon code $\C = \RS(8,2)$, over $S$, which has evaluation set $\balpha = (0, 1, \gamma, \ldots, \gamma^6)$ and rank $2$ over $S$. Note that we take $k = 2$ for simplicity; however, we may choose any $k \leq 7 = 2^{3 \cdot 1}\left(1 - \frac{1}{8}\right)$. Let $c \in \C$ be the codeword associated with the message polynomial $f(x) = \gamma^2 + x$, that is,
\[c = \left(\gamma^2, \gamma^6, \gamma^4, 0, \gamma^5, \gamma, \gamma^3, 1\right).\]
If the first node, $f_0 = \gamma^2$ is lost, then each remaining node calculates and sends $r_i := \Tr_R^S\left(\frac{f_i}{\gamma^{i-1}} \right)$. The vector of these seven values is then 
\[\mathbf{r} = (1, 1, 0 , 0 , 0, 1, 0).\]
Furthermore, for each basis element $\beta_i \in \B$ and each element $\alpha_j \in \alpha$, calculate $\Tr(\beta_i \alpha_j)$
\[\begin{array}{r | c c c c c c c}
 \Tr_R^S(\beta_i \alpha_j) &1 &\gamma &\gamma^2 &\gamma^3 &\gamma^4 &\gamma^5 &\gamma^6\\ \hline
 1 &1 &0 &0 &1 &0 &1 &1\\
 \gamma &0 &0 &1 &0 &1 &1 &1\\
 \gamma^2 &0 &1 &0 &1 &1 &1 &0
\end{array},\] each row we will denote by $\Tr_R^S(\beta_i \balpha^*)$. Hence the theorem allows us to recover
\begin{align*}
f_0 &= \Tr_R^S(1 f_0) 1 + \Tr_R^S(\gamma f_0)\gamma^2 + \Tr_R^S(\gamma^2 f_0) \gamma\\
&= \left(\mathbf{r} \cdot \Tr_R^S(1 \balpha^*)\right) 1 + \left(\mathbf{r} \cdot \Tr_R^S(\gamma \balpha^*)\right) \gamma^2 + \left(\mathbf{r} \cdot \Tr_R^S(\gamma^2 \balpha^*)\right) \gamma\\
&= 0 \cdot 1 + 1 \cdot \gamma^2 + 0 \cdot \gamma\\
&= \gamma^2,
\end{align*}
where the `$\cdot$' in the second line denotes the Euclidean inner product of the two vectors.
\end{ex}
\section{Conclusion and Future Directions}\label{conclusion}
In this article, we introduced the theory of repair schemes for codes over rings. Due to their similarity to finite fields, the primary focus of the manuscript was repair schemes for free maximum distance separable codes over Galois rings and full-length Reed-Solomon codes over Galois rings. Our work on Galois rings extensively utilized the trace function between Galois rings, the existence of a trace-dual basis, and the dual of free Reed-Solomon codes.

As we stated before, the linear exact repair scheme we developed for codes over Galois rings highly depends on the trace function between Galois rings. For more general rings, there is no obvious definition for a trace function, much less a user-friendly concept such as the trace-dual basis. This would be an interesting direction for further work.

\subsection*{Acknowledgement}
We want to thank the reviewers for their careful reading of the manuscript and for providing such insightful comments. Their suggestions have made for a much improved article. 

\end{document}